\documentclass[journal,twocolumn]{IEEEtran}

\usepackage{amsmath,amssymb,amsfonts}
\usepackage{graphicx}
\usepackage{cite}
\usepackage{algorithmic}
\usepackage{algorithm}
\usepackage{url}
\usepackage{multirow}
\usepackage{booktabs}
\usepackage{mathtools}
\usepackage{enumitem}
\usepackage{placeins}

\newcommand{\bh}{\mathbf{h}}

\newcommand{\CC}{\mathbb{C}}

\usepackage{acronym}
\usepackage{bm}
\usepackage{bbm}
\usepackage{color, soul}
\usepackage{indentfirst}
\usepackage{lipsum}
\usepackage{setspace}
\usepackage{tikz}
\usetikzlibrary{arrows}
\usepackage{subfigure}
\usepackage[amsmath,thmmarks]{ntheorem}
\usepackage{theorem}

\def \sl {single-layer }

\newtheorem{theorem}{\bf Theorem}

\AtBeginDocument{
\setlength{\textfloatsep}{6pt plus 1pt minus 2pt}
\setlength{\floatsep}{5pt plus 1pt minus 2pt}
\setlength{\intextsep}{5pt plus 1pt minus 2pt}
\setlength{\abovecaptionskip}{2pt}
\setlength{\belowcaptionskip}{-2pt}
\setlength{\abovedisplayskip}{4pt plus 1pt minus 2pt}
\setlength{\belowdisplayskip}{4pt plus 1pt minus 2pt}
\setlength{\abovedisplayshortskip}{2pt plus 1pt minus 1pt}
\setlength{\belowdisplayshortskip}{2pt plus 1pt minus 1pt}
\setlength{\jot}{1pt}
}

\begin{document}
	
	\title{Blockage-Robust Beamforming for Near-Field Communications: From Single-Airy to Multi-Airy}
	
	\author{Yi~Wang and Linglong~Dai, {\textit{Fellow, IEEE}}
		\thanks{This work was supported in part by the National Science Fund for Distinguished Young Scholars under Grant 62325106, in part by the National Science and Technology Major Projects of China under Grant 2025ZD1301800, and in part by the National Key Research and Development Program of China under Grant 2023YFB3811503.}
		\thanks{The authors are with the Department of Electronic Engineering and the State Key Laboratory of Space Network and Communications, Tsinghua University, Beijing 100084, China (e-mails: yiwang24@mails.tsinghua.edu.cn, daill@tsinghua.edu.cn).
	}
	}
	
	\maketitle

\begin{abstract}
High-frequency communications strongly depend on the line-of-sight (LoS) path, and obstacle blockage can severely degrade the received signal power and achievable rate.
Near-field Airy beams with curved trajectories can circumvent obstacles, offering a promising way to alleviate blockage.
However, since an Airy beam carries most useful energy along a single curved trajectory, existing Airy beamforming methods are highly sensitive to estimation errors of transmitter-obstacle-receiver geometry. That is to say, even a small error in the estimated geometry may cause the mismatched Airy trajectory, leading to severe performance loss.
To address this problem, we propose a multi-Airy beamforming scheme for blockage-robust near-field communications.
Specifically, we first reveal and analyze the sensitivity mechanism of single-Airy beamforming.
This mechanism motivates us to extend the single-Airy generation method to a coordinated multi-Airy generation method by deriving the phase offsets required to coherently combine multiple Airy beams at the target user.
Based on this coordinated generation method, we partition the transmit array into multiple sub-arrays and configure a tailored Airy beam for each sub-array, so that the resulting Airy beams formed by multiple curved trajectories can be coherently combined at the target user.
Simulation results verify the sensitivity of single-Airy beamforming and the robustness of multi-Airy beamforming under estimation errors of transmitter-obstacle-receiver geometry.
Moreover, the proposed scheme achieves higher achievable rates than single-Airy beamforming in blocked scenarios without geometry estimation errors.
\end{abstract}
	
	\begin{IEEEkeywords}
		Airy beam, beamforming, blockage-robust, near-field communications.  
	\end{IEEEkeywords}
	
	\vspace{-1em}
	\section{Introduction} \label{sec-intro}
	
	6G wireless systems are evolving towards millimeter-wave and Terahertz bands to exploit wider bandwidths to improve the peak data rate. At these high frequencies, severe reflection and refraction losses make wireless links more dependent on line-of-sight (LoS) propagation. Thus, large antenna apertures are usually required to achieve high array gains through more directional transmission~\cite{2024tutorial,2021THz,optics,nearfield}. Since large apertures concentrate the transmitted signal energy on narrow directed beams, blockage of the LoS path by obstacles can cause severe degradation in received signal power and achievable rate, or even link outage~\cite{shurakov2023empirical,hanchong}.

	Several techniques have been investigated to alleviate blockage-induced high-frequency link degradation. For example, distributed base-station architectures can connect a user to multiple geographically separated base stations, thereby exploiting spatial diversity against blockage~\cite{maccartney2019bsdiversity,kumar2021blockageaware}. Such distributed architectures can improve link reliability at the cost of additional base-station deployment, inter-site coordination, and fronthaul/synchronization overhead. To avoid the need for additional active base stations, reconfigurable intelligent surfaces (RISs) have been studied as lower-cost passive auxiliary devices to create additional paths for blocked high-frequency links~\cite{2017recent,chen2021towards}. Unfortunately, RIS-aided solutions still require the deployment of additional programmable hardware. The practical application of distributed and RIS-aided architectures above can be limited by these additional hardware costs. Therefore, a more practical solution addressing the blockage problem without additional hardware is desirable.

	Fortunately, without introducing additional hardware, near-field Airy beams are attractive for addressing the blockage problem, since their curved trajectories can circumvent obstacles in the LoS path~\cite{overview,selfhealing1,selfhealing2,darsena2025airy}. Specifically, in contrast to directed beams that mainly concentrate signal energy along a direct LoS path, the self-accelerating property enables Airy beams to follow curved trajectories during propagation. Thus, they can deliver useful signal power to the target user after avoiding the obstacles. In addition, their approximately nondiffracting property helps to maintain a concentrated main lobe over distance, and their self-healing property enables the main lobe of the Airy beam to be reconstructed from the remaining side lobes after blockage. Moreover, from an implementation perspective, Airy beams can be directly generated by controlling the existing transmit array without additional hardware~\cite{hanchong,hanchong2,chenhz}. To sum up, Airy beams provide a promising solution to the blockage problem in high-frequency communication systems.

	\subsection{Prior Works}
	\vspace{-0.2em}
The Airy wave was first theoretically identified in 1979 as a nonspreading wave-packet solution in quantum mechanics~\cite{airy1}. Then its nonspreading and accelerating properties were studied in~\cite{airy2,airy3}. Since the paraxial diffraction equation in optics has exactly the same mathematical form as the free-particle Schr\"odinger equation in quantum mechanics, the Airy-wave solution was later introduced to the optics community~\cite{airy1,airy5}. The first experimental observation of Airy beams in optics was reported in 2007~\cite{airy5}, which marked the transition of Airy beams from a theoretical concept to an experimentally observed physical phenomenon. Subsequent studies verified the self-healing and approximately nondiffracting propagation properties of Airy beams~\cite{airy4,selfhealing1,selfhealing2,airy6}. In addition, the ideal Airy beam has infinite energy and thus is not physically realizable. To this end, finite-energy Airy beams were introduced to make Airy beam generation practical~\cite{airyfinite}.
	
These properties, observed and studied in optics, motivate the application of Airy beams to address the blockage problem in wireless communications. For example,~\cite{airy7} investigated the use of Airy beams for obstacle circumvention in wireless applications for the first time in 2022. Subsequent studies examined Airy beam by both theoretical evaluation and experimental validation. Theoretically, some works compared Airy beams with Gaussian, focused, and other curved-beam architectures under blockage~\cite{hanchong,songlingyang,li2026beammanipulation}. These works considered Airy beams as representative wavefront-engineering tools to address the blockage problem in wireless communications. Experimentally, several works had validated the blockage-circumvention capability of Airy beams in practical high-frequency communication systems~\cite{lee,curving}.

Based on these theoretical and experimental studies of Airy beams, researchers further investigated how to design communication-oriented Airy beams. Specifically,~\cite{hanchong2} showed that Airy beams can be generated by controlling phased arrays rather than by relying on conventional optical-lens hardware in optics, which improves the practicality of Airy beam deployment in wireless communication systems. Based on this generation method,~\cite{zhao2026efficienttraining} developed an efficient Airy beam training framework by codebook pruning and fast scanning. Unlike these conventional model-driven designs,~\cite{chenhz} proposed a data-driven learning framework to search for the optimal trajectory of Airy beams under blockage. In summary, these works demonstrate the feasibility of Airy beams for blockage mitigation in high-frequency communication systems.
	
However, the performance of Airy beams is highly sensitive to estimation errors of transmitter-obstacle-receiver geometry, i.e., even a small error in estimating this geometry can cause a significant loss of the received signal power. Specifically, this sensitivity mainly stems from the single-trajectory structure adopted by most existing Airy beam schemes, where a single dominant Airy main lobe carries most of the useful signal energy. Therefore, the Airy trajectory must be accurately selected according to the transmitter-obstacle-receiver geometry. If the estimated geometry deviates from the actual geometry, the obstacle may block this single trajectory and destroy the main lobe, so the signal energy received at the target user can drop significantly. For example, as shown later in this paper, a \(9.4\%\) geometry estimation error can result in a \(95.2\%\) loss of received signal energy. This finding motivates a robust Airy beamforming scheme alleviating the dependence on a single accurately selected curved trajectory.
	
\subsection{Our Contributions}
In this paper, we propose a multi-Airy beamforming scheme to address the single-trajectory sensitivity problem of existing Airy beam schemes. The proposed scheme partitions the transmit array into multiple sub-arrays and generates multiple coordinated Airy beams to mitigate the sensitivity caused by relying on a single curved trajectory.\footnote{Simulation codes will be provided to reproduce the results in this article: \url{http://oa.ee.tsinghua.edu.cn/dailinglong/publications/publications.html}.} Specifically, the main contributions of this paper are summarized as follows:
\begin{itemize}
		\item First, we reveal and analyze the serious sensitivity problem of single-Airy beamforming under estimation errors of transmitter-obstacle-receiver geometry. Our analysis shows that a small geometry estimation error can result in an obvious loss of the received signal power at the target user when the actual obstacle blocks the designed Airy trajectory and destroys the main lobe. For example, a \(9.4\%\) geometry estimation error can lead to a \(95.2\%\) loss of the received signal energy.
		\item Based on the analysis above, we then extend the single-Airy generation method into a coordinated multi-Airy generation method. This extension characterizes how multiple Airy beams can be generated and configured to be coherently combined at the target user. Since the derivation only requires phase alignment among independently generated Airy beams, it can also be applied to more general applications such as distributed systems.
		\item To address the sensitivity problem, we develop a multi-Airy beamforming scheme based on the coordinated multi-Airy generation method. To be specific, the proposed scheme partitions the transmit array into multiple sub-arrays, and each sub-array is configured a tailored Airy beam. These generated Airy beams with different curved trajectories can be coherently combined at the target user. The resulting trajectory diversity alleviates the sensitivity caused by relying on a single accurately selected Airy trajectory.
		\item Finally, simulation results verify that the proposed multi-Airy beamforming scheme is much more robust than the single-Airy beamforming scheme under estimation errors of transmitter-obstacle-receiver geometry. For example, under a 40-mm geometry error, the optimal multi-Airy configuration compensates for the achievable-rate loss, reducing it from 97.2\% with single-Airy beamforming to just 33.3\%. Moreover, under blocked propagation without geometry estimation errors, multi-Airy beamforming still achieves higher rates than single-Airy beamforming, with up to a \(2.57\times\) rate improvement in high-blockage cases.
\end{itemize}

	\subsection{Organization and Notation}
	\subsubsection{Organization}
	The rest of this paper is organized as follows. Section II presents the near-field channel and blockage propagation models. Section III briefly reviews the Airy beam propagation and generation principles. Section IV analyzes the limitations of existing single-Airy beamforming. Section V develops the coordinated multi-Airy generation method. Section VI presents the proposed multi-Airy beamforming scheme together with the robustness analysis. Section VII provides the simulation results, and Section VIII concludes the paper.
	
	\subsubsection{Notation}
	${\bf a}^H$ and ${\bf A}^H$ denote the conjugate transpose of vector ${\bf a}$ and matrix ${\bf A}$, respectively. The notation $\|{\bf a}\|_2$ denotes the Euclidean norm of ${\bf a}$, \(\operatorname{supp}(\cdot)\) denotes the support of a function, and \(\mathbbm{1}(\cdot)\) denotes the indicator function. The operator \(\mathcal{O}(\cdot)\) is used for complexity order.
	
\section{System Model} \label{sec-model}
We first characterize the near-field channel and then the blockage propagation model, laying the groundwork for the following sections.

	\subsection{Near-Field Channel Model}
	We consider a near-field communication system with a uniform linear array (ULA) transmitter (Tx) and a single-antenna receiver (Rx). The ULA is placed on the $x$-axis at $z=0$ and is centered at the origin. The $n$-th antenna coordinate is
	\begin{equation}
		x_n=\left(n-\frac{N+1}{2}\right)d,\quad n=1,2,\cdots,N,
	\end{equation}
	where $d\leq \lambda/2$ is adopted to suppress spatial aliasing and grating lobes, $\lambda=c/f_c$ is the wavelength, $c$ is the speed of light, and $f_c$ is the carrier frequency. The physical aperture length is denoted by $L=(N-1)d$. The Rx, which is also the intended focal target, is located at
	\begin{equation}
		(x_0,z_0)=(r\sin\theta,r\cos\theta)
	\end{equation}
	in the $x$-$z$ plane. This point is the nominal receiver location. The target window used later is an evaluation region around this location, representing finite positioning tolerance or a small receive aperture, rather than a change to the channel center. We focus on the radiating near-field regime, where the link distance is below the aperture-dependent Rayleigh distance, on the order of $2L^2/\lambda$, and the spherical phase variation across the aperture cannot be replaced by a plane-wave approximation.
	
	For the unblocked reference link, the line-of-sight (LoS) near-field channel vector $\bh \in \CC^{N \times 1}$ is
	\begin{equation}
		[\bh]_n = \frac{\rho}{L_n} e^{-\jmath k L_n}, \quad n = 1,2,\cdots,N,
		\label{eq:near_field_channel}
	\end{equation}
	where $\rho$ absorbs the large-scale attenuation and antenna constants, $k=2\pi/\lambda$ is the wavenumber, and
	\begin{equation}
		L_n=\sqrt{(x_0-x_n)^2+z_0^2}
	\end{equation}
	is the distance from the $n$-th antenna element to the Rx. In the near field, the exact distance $L_n$ must be retained. Its second-order expansion around the array center is
	\begin{equation}
		L_n \approx r - x_n\sin\theta + \frac{x_n^2\cos^2\theta}{2r},
	\end{equation}
	which contains both the linear steering phase and the quadratic focusing phase. This term is the basic reason that near-field beamforming must match a spherical wavefront rather than only a propagation angle.

\subsection{Blocked Propagation Model} \label{sec-blockage-model}
We consider single-sided blockage between the transmit aperture and the receiver. At high frequencies, many obstacles strongly attenuate penetration, while reflected or scattered components are typically weak at the target~\cite{hanchong}. We therefore adopt a single-sided screen model that retains free-space propagation through the unblocked region and diffraction of the truncated field. This screen, together with the transmitter and receiver locations, defines the transmitter-obstacle-receiver (Tx-obs-Rx) geometry used hereafter. More complex obstacle shapes can be approximated by finer screen elements and the corresponding propagation integrals.

\begin{figure}
	\centering
	\includegraphics[width=\columnwidth]{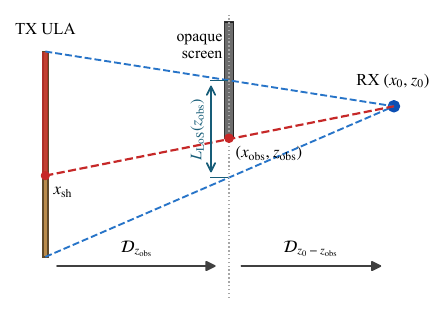}
	\caption{Representative planar-screen Tx-obs-Rx geometry with edge-error notation.}
	\label{fig:blocked_propagation_model}
\end{figure}

Fig.~\ref{fig:blocked_propagation_model} illustrates the blocked-propagation geometry. Under the coordinate system defined above, the transmit array lies on the \(x\)-axis at \(z=0\), and propagation is described in the \(x\)-\(z\) plane. For tractability, we consider a zero-thickness obstacle screen parallel to the array at \(z=z_{\mathrm{obs}}\). Under this geometry, propagation from \(z=0\) to \(z=z_0\) is modeled by two free-space diffraction segments separated by one obstacle mask.

\begin{figure*}
	\centering
	\includegraphics[width=0.85\textwidth]{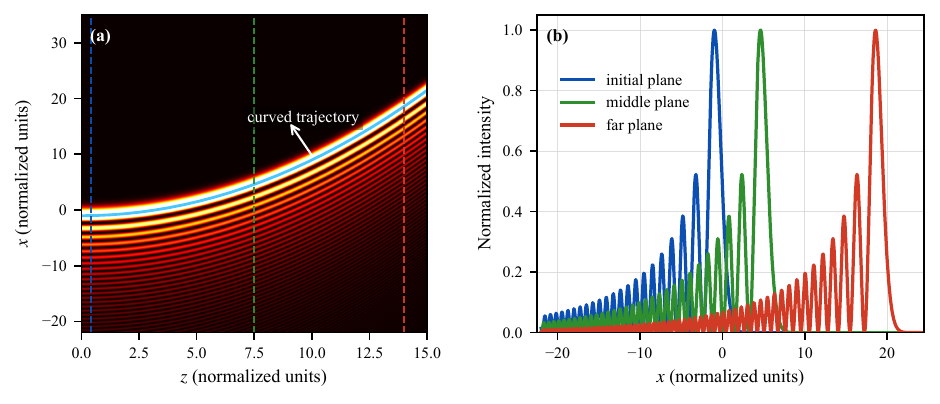}
	\caption{Conceptual illustration of Airy beam propagation: (a) curved trajectory during propagation; (b) approximately preserved transverse intensity profiles at different distances.}
	\label{fig:airy_beam_properties}
\end{figure*}

For a physical obstacle edge located at \((x_{\mathrm{obs}},z_{\mathrm{obs}})\), the corresponding shadow boundary projected onto the transmit aperture is
\begin{equation} \label{eq:blockage_window}
	x_{\mathrm{sh}}
	=
	\frac{x_{\mathrm{obs}}z_0-x_0z_{\mathrm{obs}}}
	{z_0-z_{\mathrm{obs}}},
	\qquad
	\mathrm{bl}=\frac{L/2-x_{\mathrm{sh}}}{L},
\end{equation}
where \(\mathrm{bl}\) is the invisible ratio of the transmit aperture from the target under the single-sided shadow approximation. For \(0\leq \mathrm{bl}\leq1\), it equals the invisible aperture fraction; values \(\mathrm{bl}>1\) are used later only as a geometric extrapolation beyond complete target-visible aperture blockage. This definition separates the aperture shadow from the longitudinal obstacle position, so a fixed \(\mathrm{bl}\) represents the same aperture shadow when \(z_{\mathrm{obs}}\) changes.
At \(z=z_{\mathrm{obs}}\), the aperture-endpoint-to-target rays intersect the obstacle plane at two points. Their transverse distance defines the effective LoS width
\begin{equation}
	L_{\mathrm{LoS}}(z_{\mathrm{obs}})
	=
	L\frac{z_0-z_{\mathrm{obs}}}{z_0}.
\end{equation}
Let \(\hat{x}_{\mathrm{obs}}\) denote the obstacle-edge coordinate inferred from the estimated Tx-obs-Rx geometry. In the planar-screen model, the signed geometry estimation error and its normalized magnitude are parameterized as
\begin{equation} \label{eq:normalized_edge_error}
	\Delta x_e
	\triangleq
	\hat{x}_{\mathrm{obs}}-x_{\mathrm{obs}},
	\qquad
	\epsilon_e
	\triangleq
	\frac{|\Delta x_e|}
	{L_{\mathrm{LoS}}(z_{\mathrm{obs}})}.
\end{equation}
Thus, \(\Delta x_e\) is the signed obstacle-edge coordinate error induced by the Tx-obs-Rx geometry estimation error, and \(\epsilon_e\) is its magnitude normalized by the effective LoS width. Since geometry errors mainly perturb the effective obstacle edge in this planar model, their sensitivity impact mainly enters through \(\Delta x_e\). This connection is illustrated in Section \ref{subsec:single_airy_sensitivity} and analyzed in Section \ref{subsec:robustness_analysis}. Therefore, unless otherwise stated, the Tx-obs-Rx geometry error and its induced obstacle-edge error are used interchangeably as a local approximation in this paper.
The corresponding hard truncation mask at the obstacle plane is
\begin{equation}
	M_{\mathrm{obs}}(x;x_{\mathrm{obs}})
	=
	\begin{cases}
		1, & x<x_{\mathrm{obs}},\\
		0, & x\ge x_{\mathrm{obs}}.
	\end{cases}
\end{equation}

Let \(\mathcal{D}_{z}\) denote the scalar free-space propagation operator over distance \(z\), defined by the Rayleigh-Sommerfeld diffraction integral~\cite{hanchong2}. For an aperture field \(\psi_0(x)\), the incident field at the obstacle plane is
\begin{equation} \label{eq:pre_blockage_prop}
	\psi(x,z_{\mathrm{obs}})
	=
	\mathcal{D}_{z_{\mathrm{obs}}}\{\psi_0\}(x).
\end{equation}
Thus, the field after truncation is
\begin{equation} \label{eq:blockage_truncation}
	\psi'(x,z_{\mathrm{obs}})
	=
	\psi(x,z_{\mathrm{obs}})
	M_{\mathrm{obs}}(x;x_{\mathrm{obs}}).
\end{equation}
Propagation over the remaining distance gives the received field at the target plane:
\begin{equation} \label{eq:post_blockage_prop}
	\psi_{\mathrm{tot}}(x,z_0)
	=
	\mathcal{D}_{z_0-z_{\mathrm{obs}}}
	\{\psi'(\cdot,z_{\mathrm{obs}})\}(x).
\end{equation}
For an arbitrary obstacle-edge coordinate \(x_e\), the blocked propagation operator is
\begin{equation} \label{eq:blockage_diffraction_operator}
	\begin{aligned}
		\mathcal{B}_{x_e,z_{\mathrm{obs}}}
		\{\psi_0\}(x)
		&\triangleq
		\mathcal{D}_{z_0-z_{\mathrm{obs}}}
		\left\{
		M_{\mathrm{obs}}(\cdot;x_e)
		\right.\\[-0.2em]
		&\qquad\left.
		{}\times\mathcal{D}_{z_{\mathrm{obs}}}\{\psi_0\}(\cdot)
		\right\}(x).
	\end{aligned}
\end{equation}
For \(x_e=x_{\mathrm{obs}}\), \(\psi_{\mathrm{tot}}(x,z_0)=\mathcal{B}_{x_{\mathrm{obs}},z_{\mathrm{obs}}}\{\psi_0\}(x)\). With this operator, the near-field reference channel and the blocked propagation model serve as the basis for the following analysis.

\section{Fundamentals of Airy Beam} \label{sec-airy-fundamentals}
This section reviews the Airy beam theory and generation principles. We first introduce the fundamental propagation properties and mathematical representation of Airy beams. We then explain how Airy beams can be generated directly by antenna arrays.

\subsection{Propagation of Airy Beam}

Airy beams are accelerating wavepackets governed by the paraxial Helmholtz equation. They have recently been revisited for wireless design because their curved trajectories can be exploited for practical blockage-mitigation tasks~\cite{overview,hanchong2,darsena2025airy}. Airy beams exhibit self-accelerating propagation, approximate nondiffraction, and self-healing after blockage. Fig.~\ref{fig:airy_beam_properties} illustrates the first two properties: the beam follows a curved trajectory in Fig.~\ref{fig:airy_beam_properties}(a), and its normalized transverse intensity profile remains approximately preserved during propagation in Fig.~\ref{fig:airy_beam_properties}(b). These properties make Airy beams attractive for blockage-robust communication.

The scalar field of a 1D Airy beam in free space is governed by the scalar paraxial wave equation:
\begin{equation}
	\left( \frac{\partial^2}{\partial x^2} + 2\jmath k \frac{\partial}{\partial z} \right) \psi(x,z) = 0,
\end{equation}
where $\psi(x,z)$ denotes the complex field at transverse coordinate $x$ and propagation distance $z$, $k=2\pi/\lambda$ is the free-space wavenumber, and $\jmath$ is the imaginary unit. For an ideal Airy beam, the input-plane boundary condition at $z=0$ is $\psi(x,0)=\text{Ai}(\gamma x)$,
where $\gamma$ is the transverse scaling factor and is inversely related to the main lobe width.

Under this boundary condition, the closed-form solution is
\begin{equation}
	\psi(x,z) = \text{Ai}\left(\gamma x - \frac{\gamma^4 z^2}{4k^2} \right) e^{\jmath \left( \frac{\gamma^2 z}{2k} \right) \left( \gamma x - \frac{\gamma^4 z^2}{6 k^2} \right)},
\end{equation}
where $\text{Ai}(\cdot)$ is the Airy function of the first kind. The Airy term determines the transverse intensity envelope, and the exponential term provides the associated phase distribution. The high-intensity caustic corresponds to a constant Airy-function argument. Ignoring this constant transverse offset gives the parabolic Airy trajectory
\begin{equation}
	x(z) = \frac{\gamma^3}{4k^2} z^2,
\end{equation}
which describes the deterministic parabolic bending of the Airy trajectory during free-space propagation. In the ideal solution, the sign of $\gamma$ determines the bending direction, and its magnitude controls the bending strength.

Because the ideal Airy beam has infinite total energy, practical transmission uses a finite-energy version obtained by exponential truncation:
\begin{equation}
	\psi(x,0) = \operatorname{Ai}\left( \gamma x \right) e^{\alpha_{\mathrm A}\gamma x},
	\label{eq:finite_energy_airy}
\end{equation}
where $\alpha_{\mathrm A}>0$ controls the Airy truncation in the normalized transverse coordinate, and a larger $\alpha_{\mathrm A}$ concentrates more power in the main lobe~\cite{overview}. In \eqref{eq:finite_energy_airy}, \(\gamma>0\) is used for definiteness so that the exponential apodization suppresses the oscillatory tail. The opposite bending direction can be represented by the reflected coordinate \(x\mapsto -x\), and is later controlled directly by the trajectory parameter \(B\) in the array-based formulation.

\subsection{Array-Based Generation of Airy Beam} \label{sec-equivalence}

In optics, Airy beam generation commonly relies on lens-based implementations, but such hardware is not necessarily required in a communication system. Phased arrays can generate a carefully designed initial amplitude and phase distribution and use the free-space propagation integral to replace the lens-based Fourier transform for Airy beam generation~\cite{hanchong,hanchong2,chenhz}. As shown in Fig.~\ref{fig:array_based_airy_generation_geometry}, the Airy beam is specified by three trajectory parameters \((B,F,\theta)\). The dashed line denotes the Airy generation plane, and the orange curve illustrates the Airy trajectory associated with these parameters in a representative Tx-obs-Rx geometry. The parameter \(B\), with unit \(\mathrm{m}^{-1}\), controls the bending strength and direction. The distance \(F\) specifies the quadratic focusing term associated with the Airy beam generation plane, and \(\theta\) is the steering angle measured from the positive \(z\)-axis.

\begin{figure}
	\centering
	\includegraphics[width=\columnwidth]{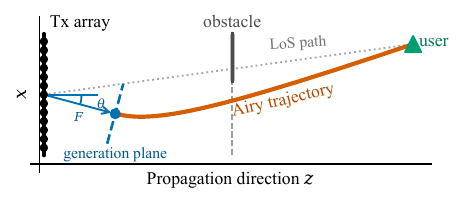}
	\caption{Array-based Airy beam generation geometry.}
	\label{fig:array_based_airy_generation_geometry}
\end{figure}

Following~\cite{hanchong2}, with Gaussian aperture width \(\omega_0\), the continuous aperture field for one Airy beam can be written as
\begin{equation}
	\begin{aligned}
	\psi_0(x')
	&=
	\exp\!\left[
	-\frac{x'^2}{\omega_0^2}
	+\jmath\Phi_{\mathrm A}(x';B,F,\theta)
	\right],\\
	\Phi_{\mathrm A}(x';B,F,\theta)
	&=
	\frac{(2\pi B)^3}{3}x'^3
	-\frac{\pi}{\lambda F}x'^2
	-\frac{2\pi}{\lambda}\sin\theta\,x'.
	\end{aligned}
	\label{eq:single_array_airy_field}
\end{equation}
The Gaussian factor realizes finite-energy truncation, and the cubic, quadratic, and linear phase terms determine Airy bending, near-field focusing, and steering, respectively.

Under the Fresnel/paraxial approximation and the continuous-aperture integral model, the corresponding closed-form Airy field is~\cite{hanchong2}
\begin{equation}
	\Psi_{\mathrm A}(x,z;B,F,\theta)
	=
	\frac{e^{\jmath kz}}{\jmath\lambda zB}
	e^{\jmath\pi x^2/(\lambda z)}
	e^{\jmath\phi_{\mathrm c}}
	\operatorname{Ai}(\xi),
	\label{eq:closed_form_airy_field}
\end{equation}
where
\begin{equation}
	\phi_{\mathrm c}
	=
	\frac{2C_2^3}{3A^2}
	-
	\frac{C_1C_2}{A},
	\quad
	A=(2\pi B)^3,
	\label{eq:closed_form_airy_phase_terms}
\end{equation}
\begin{equation}
	\begin{aligned}
	C_1
	&=
	-\left(
	\frac{2\pi}{\lambda}\sin\theta
	+
	\frac{2\pi x}{\lambda z}
	\right),\\
	C_2
	&=
	\frac{\pi}{\lambda}
	\left(
	\frac{1}{z}
	-
	\frac{1}{\widetilde F}
	\right),
	\quad
	\frac{1}{\widetilde F}
	=
	\frac{1}{F}
	-\jmath\frac{\lambda}{\pi\omega_0^2},
	\end{aligned}
	\label{eq:closed_form_airy_c_terms}
\end{equation}
and
\begin{equation}
	\xi(x,z;B,F,\theta)
	=
	-\frac{\sin\theta}{\lambda B}
	-\frac{x}{\lambda zB}
	-\frac{(1/z-1/\widetilde F)^2}
	{16\lambda^2\pi^2B^4}.
	\label{eq:closed_form_airy_argument}
\end{equation}
In this paper, \(\Psi_{\mathrm A}\) is used as the ideal single-Airy response for phase alignment.
For a discrete ULA, the continuous aperture field in \eqref{eq:single_array_airy_field} is sampled at the antenna locations \(x_n\):
\begin{equation}
	\psi_{0,n}(B,F,\theta)
	=
	\exp\!\left[
	-\frac{x_n^2}{\omega_0^2}
	+\jmath\Phi_{\mathrm A}(x_n;B,F,\theta)
	\right].
	\label{eq:single_array_airy_discrete}
\end{equation}

These Airy beam properties, together with the array-based generation principle, provide the foundation for the following multi-Airy construction.

\section{From Single-Airy to Multi-Airy} \label{sec-multi-airy-theory}
This section first identifies the sensitivity of single-Airy beamforming caused by its single-trajectory structure. Based on this observation, the single-Airy generation method is extended to a coordinated multi-Airy generation method. The phase-alignment rule for coherent combining is then derived, and a practical refinement is further provided under sub-array truncation and finite-window reception.

\subsection{Sensitivity of Single-Airy} \label{subsec:single_airy_sensitivity}

\begin{figure}
	\centering
	\includegraphics[width=0.90\linewidth]{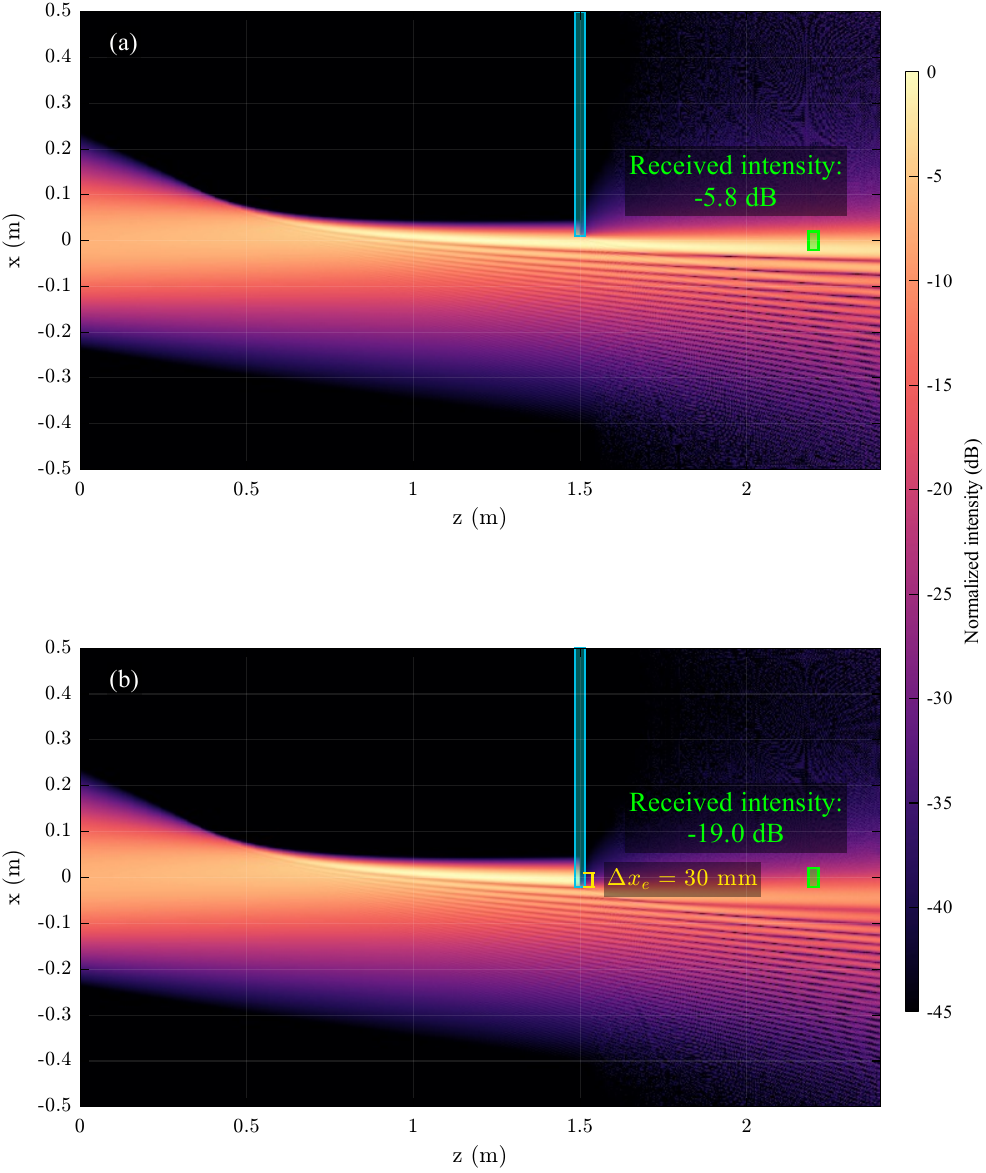}
	\caption{Single-Airy intensity distributions under Tx-obs-Rx geometry estimates: (a) accurate estimate; (b) inaccurate estimate with \(\Delta x_e=30\) mm.}
	\label{fig:single_airy_edge_error_field}
\end{figure}

Most received signal power of a single Airy beam is carried by the main lobe. To address blockage, the trajectory associated with this main lobe must be placed accurately with respect to the actual Tx-obs-Rx geometry. The remaining side lobes and diffracted field components are usually much weaker than the main lobe, so they cannot provide a comparably strong alternative trajectory when the main lobe is intercepted. Consequently, a small Tx-obs-Rx geometry estimation error can cause a sharp reduction in the received target signal power and achievable rate.

Fig.~\ref{fig:single_airy_edge_error_field} directly visualizes the resulting performance loss by comparing an accurate Tx-obs-Rx geometry estimate with a 30-mm geometry estimation error for the same target. In this example, \(L_{\mathrm{LoS}}(z_{\mathrm{obs}})=318\) mm, so the normalized geometry estimation error is only \(\epsilon_e\approx9.4\%\). This absolute error is also far below the 0.75-m horizontal positioning-accuracy requirement specified for the Indoor Factory-ISAC test environment in IMT-2030~\cite{ma2026imt2030tpr}. Nevertheless, the target-window average intensity decreases from \(-5.8\) dB to \(-19.0\) dB, corresponding to an approximately \(95.2\%\) loss of received signal energy in the linear scale. This comparison clearly demonstrates that a small Tx-obs-Rx geometry estimation error can produce a disproportionate reduction in the received target signal energy.

The discussion above serves as a qualitative explanation and an intuitive numerical example, which makes the subsequent comparison with the multi-Airy construction clearer. Detailed analysis is deferred to Section \ref{subsec:robustness_analysis}. There, a bound on the target-window energy variation under Tx-obs-Rx geometry estimation errors is derived, and the field-distribution conditions that lead to severe performance degradation are identified.

\subsection{Phase Alignment for Coordinated Multi-Airy Combining} \label{subsec:closed_form_phase_alignment}
Consider \(M\) independently generated Airy beams with spatially arbitrary reference centers \((x_{c,m},z_{c,m})\) and trajectory parameters \((B_m,F_m,\theta_m)\). For the ULA considered in this paper, this general description reduces to \(z_{c,m}=0\), with \(x_{c,m}\) being the sub-array center. The response of the \(m\)-th beam at the target point \((x_0,z_0)\) is
\begin{equation}
	\begin{aligned}
	g_m
	&=
	\Psi_{\mathrm A}(\Delta x_m,\Delta z_m;B_m,F_m,\theta_m)\\
	&=
	\frac{e^{\jmath k\Delta z_m}}{\jmath\lambda\Delta z_m B_m}
	e^{\jmath\pi\Delta x_m^2/(\lambda\Delta z_m)}
	e^{\jmath\phi_{\mathrm c,m}}
	\operatorname{Ai}(\xi_m),
	\end{aligned}
	\label{eq:closed_form_subarray_response}
\end{equation}
where
\(\Delta x_m=x_0-x_{c,m}\), \(\Delta z_m=z_0-z_{c,m}\), and
\begin{equation}
	(\phi_{\mathrm c,m},\xi_m)
	=
	(\phi_{\mathrm c},\xi)
	\big|_{(x,z,B,F,\theta)=(\Delta x_m,\Delta z_m,B_m,F_m,\theta_m)}.
	\label{eq:subarray_closed_form_terms}
\end{equation}
Therefore, the phase of the \(m\)-th target response can be written as
\begin{equation}
	\begin{aligned}
	\varphi_m
	&=\arg(g_m)\\
	&=
	k\Delta z_m
	+\frac{\pi\Delta x_m^2}{\lambda\Delta z_m}
	+\Re\{\phi_{\mathrm c,m}\}\\
	&\quad
	+\arg\!\left[\operatorname{Ai}(\xi_m)\right]
	-\frac{\pi}{2}
	-\arg(B_m).
	\end{aligned}
	\label{eq:subarray_target_response_phase}
\end{equation}
Here \(B_m\in\mathbb{R}\setminus\{0\}\), and \(\arg(\cdot)\) uses the principal branch. Different branch choices only add integer multiples of \(2\pi\) to \(\varphi_m\) and do not change the relative phase-alignment rule.
Assigning a constant phase offset \(\delta_m\) to the \(m\)-th Airy beam does not change its trajectory or aperture-power distribution. The coherent point response is therefore
\begin{equation}
	\Psi_{\mathrm{sum}}(x_0,z_0;\boldsymbol{\delta})
	=
	\sum_{m=1}^{M}e^{\jmath\delta_m}g_m,
	\label{eq:closed_form_multi_airy_sum}
\end{equation}
where \(\boldsymbol{\delta}=[\delta_1,\cdots,\delta_M]^T\). Maximizing the point-receiver power gives
\begin{equation}
	\max_{\delta_1,\cdots,\delta_M}
	\left|
	\sum_{m=1}^{M}e^{\jmath\delta_m}g_m
	\right|^2.
	\label{eq:point_phase_alignment_problem}
\end{equation}

\begin{theorem}[Phase Alignment Condition]
\label{thm:phase_alignment_condition}
	For the specified Airy-beam responses \(\{g_m\}_{m=1}^{M}\) at the target point, the target-response power in \eqref{eq:point_phase_alignment_problem} is maximized when all nonzero terms \(e^{\jmath\delta_m}g_m\) have the same phase. The maximum target-response power is \(\left(\sum_{m=1}^{M}|g_m|\right)^2\). If \(g_m\neq0\), one set of phase offsets that achieves this maximum is
	\begin{equation}
		\delta_m^{(0)}
		=
		\phi_0-\varphi_m,
		\quad m=1,\cdots,M,
		\label{eq:closed_form_phase_alignment_solution}
	\end{equation}
	where \(\phi_0\) is an arbitrary common phase.
\end{theorem}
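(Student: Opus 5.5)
The proof is a direct application of the triangle inequality in the complex plane, followed by an explicit check that the proposed offsets achieve the resulting bound.

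\emph{Upper bound.} For any phase vector \(\boldsymbol{\delta}\), the triangle inequality gives
\begin{equation*}
\left|\sum_{m=1}^{M}e^{\jmath\delta_m}g_m\right|\le\sum_{m=1}^{M}\left|e^{\jmath\delta_m}g_m\right|=\sum_{m=1}^{M}|g_m|,
\end{equation*}
since \(|e^{\jmath\delta_m}|=1\). Squaring both (nonnegative) sides shows that the objective in \eqref{eq:point_phase_alignment_problem} never exceeds \(\left(\sum_{m}|g_m|\right)^2\).

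\emph{Achievability.} For each \(m\) with \(g_m\neq0\), write \(g_m=|g_m|e^{\jmath\varphi_m}\), where \(\varphi_m=\arg(g_m)\) is given by \eqref{eq:subarray_target_response_phase}. Choosing \(\delta_m^{(0)}=\phi_0-\varphi_m\) gives \(e^{\jmath\delta_m^{(0)}}g_m=|g_m|e^{\jmath\phi_0}\). Any term with \(g_m=0\) contributes nothing, whatever \(\delta_m\) is, so its offset may be arbitrary. The sum therefore equals \(e^{\jmath\phi_0}\sum_m|g_m|\), whose squared modulus is \(\left(\sum_m|g_m|\right)^2\). This attains the upper bound for every common phase \(\phi_0\). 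The branch choice for \(\arg\) only shifts \(\varphi_m\) by multiples of \(2\pi\), which leaves \(e^{\jmath\delta_m^{(0)}}\) unchanged.

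\emph{Characterization of optimality.} The claim that the maximum occurs when all nonzero terms share the same phase rests on the equality case of the triangle inequality for complex numbers. Equality \(|\sum_m w_m|=\sum_m|w_m|\) holds iff all nonzero \(w_m\) are nonnegative multiples of a common unit vector. I would prove this by setting \(S=\sum_m w_m\) and assuming \(S\neq0\); the case \(S=0\) with \(\sum_m|w_m|>0\) cannot give equality. Then
\begin{equation*}
|S|=\Re\{e^{-\jmath\arg S}S\}=\sum_m\Re\{e^{-\jmath\arg S}w_m\}\le\sum_m|w_m|,
\end{equation*}
so equality forces \(\Re\{e^{-\jmath\arg S}w_m\}=|w_m|\) for every \(m\). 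That in turn forces \(\arg w_m=\arg S\) whenever \(w_m\neq0\). Applied with \(w_m=e^{\jmath\delta_m}g_m\), this shows that a maximizer must co-phase all nonzero terms and that co-phasing is sufficient. There is no genuine obstacle here; this equality-case step is the only point that needs care, and the degenerate case where all \(g_m=0\) is trivial.
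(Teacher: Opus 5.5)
Your proof is correct and takes essentially the same route as the paper: the triangle inequality bounds the power by $\left(\sum_m |g_m|\right)^2$, and the explicit choice $\delta_m^{(0)}=\phi_0-\varphi_m$ turns every nonzero term into $|g_m|e^{\jmath\phi_0}$, which attains that bound. The only minor difference is how the ``same phase'' characterization is justified: you use the equality case of the triangle inequality by projecting onto $\arg S$, whereas the paper expands the squared modulus into pairwise terms $2|g_m||g_\ell|\cos(\delta_m+\varphi_m-\delta_\ell-\varphi_\ell)$; both arguments give necessity as well as sufficiency, and yours states the necessity direction more explicitly.
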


The proof is given in Appendix~\ref{app:phase_alignment_proof}.

Equation \eqref{eq:closed_form_phase_alignment_solution} is the theoretical coherent-combining condition for multi-Airy beamforming: once the individual Airy beams are specified, only constant phase offsets are needed to coherently combine them at the target.

\begin{figure}
	\centering
	\includegraphics[width=0.90\linewidth]{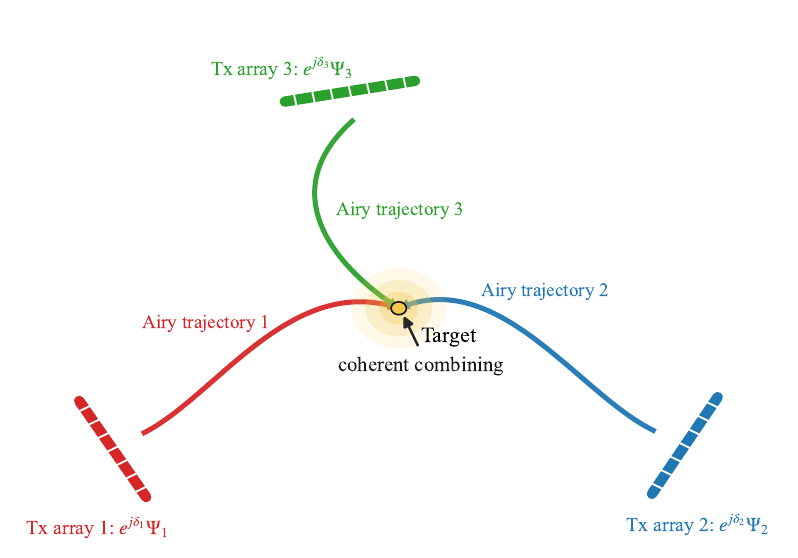}
	\caption{General coordinated multi-Airy combining from spatially separated arrays.}
	\label{fig:general_multi_airy_schematic}
\end{figure}
Fig.~\ref{fig:general_multi_airy_schematic} illustrates the general coordinated multi-Airy generation method, which can be applied to both centralized and distributed systems. As shown in the figure, multiple individual Airy beams can be generated from spatially separated or nonparallel transmit arrays, with their constant phase offsets calculated for coherent combining at the target user. Furthermore, Section \ref{sec-multi-airy} presents a specific sub-array-based implementation of this general method.

\subsection{Practical Refinement for Phase Alignment} \label{subsec:truncated_phase_refinement}
The phase-alignment rule above does not include the finite support of practical sub-arrays, which are truncated by their sub-array boundaries. The rule also aligns the fields at one point, but practical reception often considers the average energy over a target window rather than a single point. The target window can model a small receive aperture or a tolerance region around the nominal receiver location. These practical constraints require a refinement of the phase-alignment rule. Let \(\widetilde{\psi}_{0,m}(x)\) denote the actual aperture field of the \(m\)-th truncated sub-array, and define the unblocked propagated field over the target window
\begin{equation}
	u_m(x)
	=
	\left[\mathcal{D}_{z_0}\{\widetilde{\psi}_{0,m}\}\right](x),
	\qquad x\in\mathcal{R},
	\label{eq:actual_subarray_window_field}
\end{equation}
where \(\mathcal{D}_{z_0}\) is the free-space propagation operator defined in Section \ref{sec-blockage-model}, and
\begin{equation}
	\mathcal{R}
	=
	[x_0-W_{\mathrm{rx}}/2,\,
	x_0+W_{\mathrm{rx}}/2].
\end{equation}
The finite-window coherent-combining objective is
\begin{equation}
	J(\boldsymbol{\delta})
	=
	\frac{1}{W_{\mathrm{rx}}}
	\int_{\mathcal{R}}
	\left|
	\sum_{m=1}^{M}e^{\jmath\delta_m}u_m(x)
	\right|^2
	\mathrm{d}x
	=
	\mathbf{v}^{H}\mathbf{Q}\mathbf{v},
	\label{eq:window_phase_sync}
\end{equation}
where \(\mathbf{v}=[e^{\jmath\delta_1},\cdots,e^{\jmath\delta_M}]^T\), \(|v_m|=1\), and
\begin{equation}
	[\mathbf{Q}]_{mn}
	=
	\frac{1}{W_{\mathrm{rx}}}
	\int_{\mathcal{R}}u_m^*(x)u_n(x)\,\mathrm{d}x.
	\label{eq:window_phase_gram}
\end{equation}
For \(M=2\), the optimal relative phase is
\begin{equation}
	\delta_2^\star-\delta_1^\star
	=
	-\arg
	\int_{\mathcal{R}}u_1^*(x)u_2(x)\,\mathrm{d}x.
	\label{eq:two_beam_window_phase}
\end{equation}
For \(M>2\), \eqref{eq:window_phase_sync} is generally nonconvex. A feasible refinement can be obtained by cyclic coordinate ascent. Let \(\mathbf{q}_{\max}\) be the principal eigenvector of \(\mathbf{Q}\) and initialize \(v_m^{(0)}=e^{\jmath\arg([\mathbf{q}_{\max}]_m)}\). During sweep \(t+1\), define
\begin{equation}
	\begin{aligned}
		c_m^{(t+1)}
		&=
		\sum_{n<m}[\mathbf{Q}]_{mn}v_n^{(t+1)}
		+
		\sum_{n>m}[\mathbf{Q}]_{mn}v_n^{(t)},\\
		v_m^{(t+1)}
		&=
		\begin{cases}
			c_m^{(t+1)}/|c_m^{(t+1)}|,
			& c_m^{(t+1)}\neq0,\\
			v_m^{(t)}, & c_m^{(t+1)}=0.
		\end{cases}
	\end{aligned}
	\label{eq:window_phase_coordinate_update}
\end{equation}
With all other entries fixed, this update maximizes \(\mathbf{v}^{H}\mathbf{Q}\mathbf{v}\) with respect to \(v_m\), and the objective sequence is nondecreasing and bounded. The optimized offsets are recovered as \(\delta_m=\arg(v_m)\). The phase-alignment rule and its practical refinement provide the theoretical basis for proposed coordinated multi-Airy beamforming in section \ref{sec-multi-airy}.

\section{Proposed Multi-Airy Beamforming} \label{sec-multi-airy}
Based on the coordinated multi-Airy generation method in Section \ref{sec-multi-airy-theory}, this section presents the multi-Airy beamforming scheme. It then derives a bound on the target-window energy variation caused by Tx-obs-Rx geometry estimation errors. The bound further explains how trajectory diversity improves the robustness of multi-Airy beamforming.

	\subsection{Multi-Airy Beamforming Scheme}
	\subsubsection{Problem Formulation}
	To reduce the sensitivity caused by the single-trajectory structure of conventional Airy beamforming, we propose a beamforming scheme in which divided sub-arrays generate multiple coordinated Airy beams. Specifically, the full ULA is divided into \(M\) non-overlapping contiguous sub-arrays \(\{\mathcal{X}_m\}_{m=1}^{M}\). The sub-array partition can be chosen flexibly according to implementation constraints. For notational simplicity, unless otherwise stated, we use a uniform contiguous partition and assume that \(M\) divides \(N\). The \(m\)-th sub-array occupies
	\begin{equation}
		\mathcal{X}_m=[x_{m,\min},x_{m,\max}),
	\end{equation}
	where
	\begin{equation}
		x_{m,\min}=\frac{(m-1)L}{M}-\frac{L}{2},
		\qquad
		x_{m,\max}=\frac{mL}{M}-\frac{L}{2}.
	\end{equation}
	Its center and support function are
	\begin{equation}
		x_{c,m}=\frac{x_{m,\min}+x_{m,\max}}{2},
		\qquad
		\Pi_m(x)=\mathbbm{1}(x\in\mathcal{X}_m).
	\end{equation}
	The corresponding antenna-index set is
	\begin{equation}
		\mathcal{N}_m
		=
		\{n\in\{1,\cdots,N\}:x_n\in\mathcal{X}_m\}.
	\end{equation}
	Each sub-array is configured with a tailored Airy beam whose trajectory is independently specified. Following Section \ref{sec-multi-airy-theory}, the \(m\)-th trajectory is parameterized by
	\begin{equation}
		\boldsymbol{\eta}_m=(B_m,F_m,\theta_m),
	\end{equation}
	and \(\boldsymbol{\eta}=\{\boldsymbol{\eta}_m\}_{m=1}^{M}\) collects all sub-array trajectory parameters. The constant phase offsets \(\{\delta_m\}_{m=1}^{M}\) coordinate these beams according to Section \ref{sec-multi-airy-theory}. Let \(\psi_{0,m}(x;\boldsymbol{\eta}_m)\) denote the aperture field supported on \(\mathcal{X}_m\). The combined aperture field is
	\begin{equation}
		\psi_0(x;\boldsymbol{\eta},\boldsymbol{\delta})
		=
		\sum_{m=1}^{M}
		e^{\jmath\delta_m}\psi_{0,m}(x;\boldsymbol{\eta}_m),
	\end{equation}
	In this formulation, \(x_e\) denotes the estimated obstacle edge associated with the Tx-obs-Rx geometry used for trajectory selection. Under this estimated obstacle edge, the average received signal energy over the target window is
	\begin{equation}
		J_{\mathrm{B}}(\boldsymbol{\eta},\boldsymbol{\delta};x_e)
		=
		\frac{1}{W_{\mathrm{rx}}}
		\left\|
		\mathcal{P}_{\mathcal{R}}
		\mathcal{B}_{x_e,z_{\mathrm{obs}}}
		\{\psi_0(\cdot;\boldsymbol{\eta},\boldsymbol{\delta})\}
		\right\|_2^2,
		\label{eq:blocked_multi_airy_energy}
	\end{equation}
	where \(\mathcal{R}=[x_0-W_{\mathrm{rx}}/2,x_0+W_{\mathrm{rx}}/2]\) and \(\mathcal{P}_{\mathcal{R}}\) restricts a target-plane field to this receiver window.
	The multi-Airy beamforming objective can therefore be stated as
	\begin{equation}
		\begin{aligned}
			\underset{\boldsymbol{\eta},\boldsymbol{\delta}}{\operatorname{maximize}}
			\quad&
			J_{\mathrm{B}}(\boldsymbol{\eta},\boldsymbol{\delta};x_e)\\
			\operatorname{subject\ to}
			\quad&
			\operatorname{supp}\!\left(\psi_{0,m}\right)
			\subseteq\mathcal{X}_m,\quad m=1,\cdots,M,\\
			&
			\boldsymbol{\eta}_m\in\mathcal{T},
			\quad m=1,\cdots,M,\\
			&
			\sum_{n=1}^{N}|\psi_0(x_n;\boldsymbol{\eta},\boldsymbol{\delta})|^2\leq P_{\mathrm{tot}},
		\end{aligned}
		\label{eq:multi_airy_design_problem}
	\end{equation}
	where \(\mathcal{T}\) is the admissible set of Airy trajectory parameters, determined by the aperture size, maximum feasible phase gradient, paraxial validity range, and candidate trajectory constraints used by the beam-training or trajectory-selection stage. The first constraint requires each sub-array to be configured with a tailored Airy beam, the second keeps all trajectories within the admissible trajectory set, and the third enforces a common total transmit-power budget. Equation \eqref{eq:multi_airy_design_problem} states an ideal reference objective for blocked propagation under the estimated obstacle edge. Directly solving this problem is difficult because it requires the blocked propagation operator. The multi-Airy generation method in Section \ref{sec-multi-airy-theory} therefore enables a constructive beamforming rule for given trajectory parameters, based on sub-array Airy generation and coherent combining rather than full-array waveform optimization. Algorithm \ref{alg:constructive_multi_airy} does not solve \eqref{eq:multi_airy_design_problem} directly. Instead, it assumes that \(\boldsymbol{\eta}\) has been obtained by a separate beam-training or trajectory-selection stage and then constructs the corresponding multi-Airy beamforming vector.

	\subsubsection{Constructive Beamforming Procedure}
	Following the array-based Airy generation principle in Section \ref{sec-equivalence}, each sub-array samples the Airy aperture field associated with its trajectory parameters. The parameter \(\omega_0\) denotes the fixed Gaussian aperture width used in the Airy aperture field. Let
	\begin{equation}
		\bar{x}_{m,n}=x_n-x_{c,m}
	\end{equation}
	be the local coordinate of antenna \(n\) with respect to the \(m\)-th sub-array center. For \(n\in\mathcal{N}_m\), the corresponding unaligned beamforming weight is
	\begin{equation}
		[\widetilde{\mathbf{w}}_m]_n
		=
		\exp\!\left[
		-\frac{\bar{x}_{m,n}^2}{\omega_0^2}
		+\jmath\Phi_{\mathrm A}(\bar{x}_{m,n};B_m,F_m,\theta_m)
		\right].
		\label{eq:subarray_beamforming_field}
	\end{equation}
	For \(n\notin\mathcal{N}_m\), \([\widetilde{\mathbf{w}}_m]_n=0\).
	Equation \eqref{eq:subarray_beamforming_field} is the sub-array-restricted form of the discrete single-Airy weights in \eqref{eq:single_array_airy_discrete}. The sub-array number \(M\) and trajectory-parameter set \(\boldsymbol{\eta}=\{(B_m,F_m,\theta_m)\}_{m=1}^{M}\) are treated as specified inputs in the beamforming construction. This separates trajectory search from the proposed coherent multi-Airy construction. In practice, for a specified \(M\), received-signal-power-based Airy beam training, such as the method in~\cite{zhao2026efficienttraining}, can be used to probe candidate trajectory-parameter sets before applying the proposed multi-Airy beamforming rule. The receiver window \(\mathcal{R}\) is used only for the finite-window phase refinement below.

	The independently generated sub-array beams must next be phase aligned. The initial offsets follow the phase-alignment rule in \eqref{eq:closed_form_phase_alignment_solution}. For the actual truncated sub-arrays used in the simulations, we refine the offsets by propagating each unaligned sub-array field to the target window without blockage and using the finite-window inner products in \eqref{eq:window_phase_gram}. For \(M=2\), the relative phase is obtained from \eqref{eq:two_beam_window_phase}; for \(M>2\), the coordinate update in \eqref{eq:window_phase_coordinate_update} is used. After phase alignment, the unnormalized multi-Airy beamforming vector is
	\begin{equation}
		\widetilde{\mathbf{w}}
		=
		\sum_{m=1}^{M}
		e^{\jmath\delta_m}
		\widetilde{\mathbf{w}}_m.
		\label{eq:multi_airy_unnormalized_weights}
	\end{equation}
	Finally, the beamforming vector is normalized under the total transmit-power constraint:
	\begin{equation}
		\mathbf{w}
		=
		\sqrt{P_{\mathrm{tot}}}
		\frac{\widetilde{\mathbf{w}}}
		{\|\widetilde{\mathbf{w}}\|_2},
		\label{eq:multi_airy_discrete_weights}
	\end{equation}
	Equations \eqref{eq:subarray_beamforming_field}--\eqref{eq:multi_airy_discrete_weights} constitute the proposed multi-Airy beamforming procedure.

	\begin{algorithm}[!t]
		\caption{Multi-Airy Beamforming}
		\label{alg:constructive_multi_airy}
		\footnotesize
	\begin{algorithmic}[1]
			\STATE \parbox[t]{0.9\linewidth}{\textbf{Input:} target position $(x_0,z_0)$, receiver window \(\mathcal{R}\), array center \(x_c\), antenna spacing \(d\), antenna number \(N\), Gaussian aperture width \(\omega_0\), specified sub-array number \(M\), specified trajectory-parameter set \(\boldsymbol{\eta}=\{(B_m,F_m,\theta_m)\}_{m=1}^{M}\), and total transmit-power limit \(P_{\mathrm{tot}}\)}
			\STATE Set \(x_n=x_c+(n-(N+1)/2)d\) and form the default uniform contiguous sub-array index sets \(\{\mathcal{N}_m\}_{m=1}^{M}\)
			\FOR{$m=1,\cdots,M$}
				\STATE Form $\widetilde{\mathbf{w}}_m$ using Eq. \eqref{eq:subarray_beamforming_field}
				\STATE Compute the unblocked target-window field \(u_m\) using Eq. \eqref{eq:actual_subarray_window_field}
			\ENDFOR
			\STATE Set \(\delta_1=0\); if \(M=2\), compute \(\delta_2\) from Eq. \eqref{eq:two_beam_window_phase}; if \(M>2\), compute \(\{\delta_m\}_{m=1}^{M}\) from Eq. \eqref{eq:window_phase_coordinate_update} with a prescribed sweep number or stopping tolerance
			\STATE Form $\widetilde{\mathbf{w}}$ using Eq. \eqref{eq:multi_airy_unnormalized_weights}
			\STATE Compute $\mathbf{w}$ using Eq. \eqref{eq:multi_airy_discrete_weights}
			\STATE \textbf{Output:} multi-Airy beamforming vector $\mathbf{w}$
	\end{algorithmic}
	\end{algorithm}

	\subsubsection{Intuitive Illustration of the Multi-Airy Working Mechanism}
	\begin{figure}
		\centering
		\includegraphics[width=0.90\linewidth]{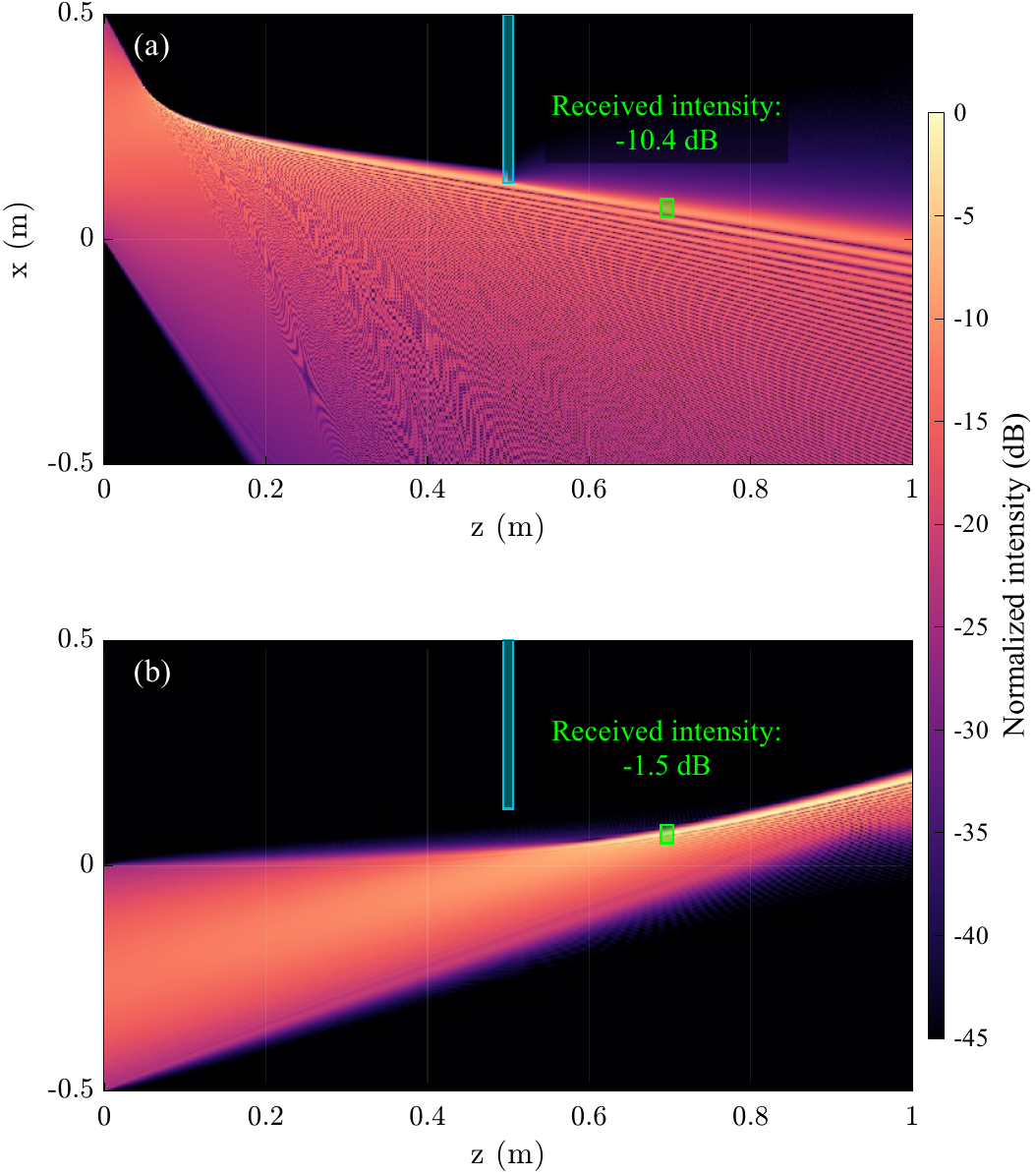}
	\caption{Representative Double-Airy example: (a) right sub-array beam; (b) left sub-array beam.}
		\label{fig:multi_airy_mechanism}
	\end{figure}

	To explain the working mechanism intuitively, Fig.~\ref{fig:multi_airy_mechanism} illustrates a representative example with \(M=2\), in which the ULA is divided into left and right half-arrays. Each half-array generates one Airy trajectory, and the two sub-array-generated trajectories experience different blockage conditions.

	In Fig.~\ref{fig:multi_airy_mechanism}(a), the right half-array faces more severe blockage and is therefore assigned trajectory parameters that generate a more strongly curved Airy trajectory outside the blocked region. In Fig.~\ref{fig:multi_airy_mechanism}(b), the left half-array faces lighter blockage and is assigned trajectory parameters that generate a less curved Airy trajectory with higher target-window average intensity. This example illustrates the central principle of the proposed scheme: different sub-arrays generate distinct Airy trajectories according to their respective propagation conditions. The resulting trajectory diversity reduces reliance on a single accurately placed trajectory, and the phase-alignment procedure in Section \ref{sec-multi-airy-theory} enables the sub-array contributions to be coherently combined at the target window.

	\subsubsection{Implementation Complexity}
	Given the specified \(M\) and trajectory-parameter set \(\boldsymbol{\eta}=\{(B_m,F_m,\theta_m)\}_{m=1}^{M}\), constructing the disjoint sub-array beamforming vectors in \eqref{eq:subarray_beamforming_field} requires \(\mathcal{O}(N)\) operations because every antenna belongs to only one sub-array. Using the point-wise phase-alignment rule in \eqref{eq:closed_form_phase_alignment_solution}, computing the \(M\) phase offsets costs \(\mathcal{O}(M)\), and coherent combination and total-power normalization require another \(\mathcal{O}(N)\) operations. Therefore, the constructive beamforming stage with known trajectory parameters has complexity \(\mathcal{O}(N+M)=\mathcal{O}(N)\), since \(M\leq N\). This complexity does not include the separate trajectory search or beam-training stage. If finite-window propagation-field refinement is used, forming the Gram matrix in \eqref{eq:window_phase_gram} adds the cost of propagating the \(M\) sub-array fields to the receiver window, and each coordinate sweep in \eqref{eq:window_phase_coordinate_update} costs \(\mathcal{O}(M^2)\) once \(\mathbf{Q}\) is available.

	\begin{figure*}[!t]
	\centering
	\includegraphics[width=0.78\textwidth]{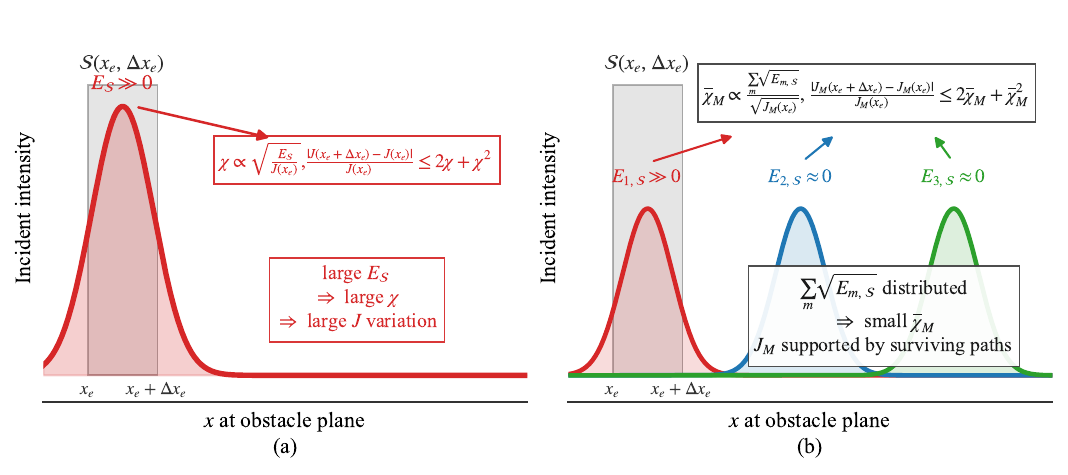}
	\caption{Interpretation of the geometry-sensitivity bounds: (a) single-Airy sensitivity; (b) multi-Airy robustness.}
	\label{fig:sensitivity_bound_interpretation}
	\end{figure*}

	\subsection{Performance Bound and Robustness Analysis} \label{subsec:robustness_analysis}
	After the Airy trajectory-parameter set \(\{(B_m,F_m,\theta_m)\}_{m=1}^{M}\) is selected by the beamforming procedure, the actual Tx-obs-Rx geometry may deviate from the nominal geometry because of estimation error or subsequent obstacle motion. In the planar-screen model, this deviation is represented by an obstacle-edge coordinate error, and we analyze how the received target-window signal energy changes under this geometry error.
	Let $x_e$ denote the estimated obstacle edge on the obstacle plane. The corresponding single-sided obstacle mask is \(M_{\mathrm{obs}}(x;x_e)\), as defined in Section \ref{sec-blockage-model}. For a fixed transmit field $\psi_0(x)$, define the field incident on the obstacle plane as
	\begin{equation}
		\psi_b(x)=\mathcal{D}_{z_{\mathrm{obs}}}\{\psi_0\}(x),
	\end{equation}
	where $\mathcal{D}_z$ is the free-space propagation operator defined in Section \ref{sec-blockage-model}. Using the blocked propagation operator in \eqref{eq:blockage_diffraction_operator}, the received field over the target window $\mathcal{R}$ under edge location $x_e$ is
	\begin{equation}
		u_{x_e}
		=\mathcal{P}_{\mathcal{R}}
		\mathcal{B}_{x_e,z_{\mathrm{obs}}}\{\psi_0\},
	\end{equation}
	where $\mathcal{P}_{\mathcal{R}}$ restricts the field to the receiver window. The corresponding target-window energy is
	\begin{equation}
		J(x_e)=\frac{1}{W_{\mathrm{rx}}}\left\|u_{x_e}\right\|_2^2.
	\end{equation}
	In this local parameterization, the perturbed edge is \(x_e+\Delta x_e\). The geometry error changes the obstacle mask only over the interval
	\begin{equation}
		\mathcal{S}(x_e,\Delta x_e)
		=
		\big[
		\min(x_e,x_e+\Delta x_e),
		\max(x_e,x_e+\Delta x_e)
		\big].
	\end{equation}
	Since \(z_0>z_{\mathrm{obs}}\) and the receiver window is finite, free-space diffraction from the obstacle plane to the target window is a bounded linear map. Let
	\begin{equation}
		C_{\mathcal{R}}(z_0-z_{\mathrm{obs}})
		=
		\left\|\mathcal{P}_{\mathcal{R}}\mathcal{D}_{z_0-z_{\mathrm{obs}}}\right\|_2
	\end{equation}
	denote the corresponding operator norm. The induced change in the received field then satisfies
	\begin{equation}
		\begin{aligned}
		&\left|
		\sqrt{J(x_e+\Delta x_e)}-\sqrt{J(x_e)}
		\right|  \\
		&\quad\le
		\frac{C_{\mathcal{R}}(z_0-z_{\mathrm{obs}})}{\sqrt{W_{\mathrm{rx}}}}
		\left(
		\int_{\mathcal{S}(x_e,\Delta x_e)}
		\left|\psi_b(x)\right|^2 \mathrm{d}x
		\right)^{1/2}.
		\end{aligned}
		\label{eq:edge_sqrt_bound}
	\end{equation}
	For a nonzero nominal target-window energy \(J(x_e)>0\), define the normalized geometry-sensitivity factor
	\begin{equation}
		\chi(x_e,\Delta x_e)=
		C_{\mathcal{R}}(z_0-z_{\mathrm{obs}})
		\left(
		\frac{
		\int_{\mathcal{S}(x_e,\Delta x_e)}
		\left|\psi_b(x)\right|^2 \mathrm{d}x
		}{
		W_{\mathrm{rx}}J(x_e)}
		\right)^{1/2},
	\end{equation}
	the relative target-energy variation is bounded by
	\begin{equation}
		\frac{|J(x_e+\Delta x_e)-J(x_e)|}{J(x_e)}
		\le
		2\chi(x_e,\Delta x_e)+\chi^2(x_e,\Delta x_e).
		\label{eq:edge_relative_bound}
	\end{equation}

	\textit{Single-Airy sensitivity:} For a fixed propagation geometry, \(C_{\mathcal{R}}(z_0-z_{\mathrm{obs}})\) is independent of the beamforming configuration. The sensitivity of a single Airy beam is therefore governed by the ratio between the incident field energy within the geometry-error strip and the nominal target-window energy that survives under the designed Tx-obs-Rx geometry. When the strip intersects the Airy main lobe, the numerator in \(\chi(x_e,\Delta x_e)\) can be large. If the nominal received signal energy \(J(x_e)\) is also mainly supported by the same dominant trajectory, the denominator does not provide a strong margin against this geometry error. Equation \eqref{eq:edge_relative_bound} thus identifies the regime in which a small Tx-obs-Rx geometry error can cause a large target-energy variation.

	\textit{Multi-Airy robustness:} For the multi-Airy beam, write the incident field as
	\begin{equation}
		\psi_b(x)
		=
		\sum_{m=1}^{M}e^{\jmath\delta_m}\psi_{b,m}(x),
	\end{equation}
	where \(\psi_{b,m}=\mathcal{D}_{z_{\mathrm{obs}}}\{\psi_{0,m}\}\). Let \(J_M\) and \(\chi_M\) denote \(J\) and \(\chi\), respectively, when evaluated for the coherently combined multi-Airy field. Define the energy of the \(m\)-th incident field within the geometry-error strip as
	\begin{equation}
		E_{m,\mathcal{S}}
		=
		\int_{\mathcal{S}(x_e,\Delta x_e)}
		|\psi_{b,m}(x)|^2\mathrm{d}x.
	\end{equation}
	Applying the triangle inequality to the error-induced field difference gives
	\begin{equation}
		\begin{aligned}
			\chi_M(x_e,\Delta x_e)
			&\leq \overline{\chi}_M,\\
			\overline{\chi}_M
			&\triangleq
			\frac{C_{\mathcal{R}}(z_0-z_{\mathrm{obs}})}
			{\sqrt{W_{\mathrm{rx}}J_M(x_e)}}
			\sum_{m=1}^{M}\sqrt{E_{m,\mathcal{S}}}.
		\end{aligned}
		\label{eq:multi_edge_sensitivity}
	\end{equation}
	where \(J_M(x_e)\) is the target-window energy of the coherently combined multi-Airy beam. Combining \eqref{eq:edge_relative_bound} and \eqref{eq:multi_edge_sensitivity} yields
	\begin{equation}
		\frac{|J_M(x_e+\Delta x_e)-J_M(x_e)|}{J_M(x_e)}
		\leq
		2\overline{\chi}_M+\overline{\chi}_M^2.
	\end{equation}

	Fig.~\ref{fig:sensitivity_bound_interpretation} illustrates the physical meaning of \eqref{eq:edge_relative_bound} and \eqref{eq:multi_edge_sensitivity}. For a single Airy beam, the geometry-error strip \(\mathcal{S}(x_e,\Delta x_e)\) can overlap the only dominant main lobe, making the numerator in \(\chi(x_e,\Delta x_e)\) large. As a result, the relative variation of \(J(x_e)\) can be large, which is reflected in a significant attenuation of the target-position energy. For multi-Airy beamforming, the same local strip first affects only part of the spatially separated trajectories. Therefore, only the corresponding \(E_{m,\mathcal{S}}\) terms become large, and the surviving trajectories can still support \(J_M(x_e)\), so the target-position energy retains stronger robustness.
	
	Equation \eqref{eq:multi_edge_sensitivity} further explains the robustness mechanism of multi-Airy beamforming. When the geometry-error strip affects only a subset of the spatially separated trajectories, only the corresponding \(E_{m,\mathcal{S}}\) terms become large, and \(J_M(x_e)\) can still be maintained by the surviving trajectories. Consequently, the relative variation of \(J_M(x_e)\) need not be large under a localized Tx-obs-Rx geometry error. Of course, this result does not imply unconditional dominance over single-Airy beamforming. The multi-Airy advantage can diminish if the obstacle simultaneously blocks all trajectories, if the trajectories overlap strongly at the obstacle plane, or if sub-array partition reduces \(J_M(x_e)\) excessively. Together, the constructive rule and the robustness bounds explain why trajectory diversity can reduce sensitivity to Tx-obs-Rx geometry errors.
	
	\section{Simulation Results and Discussions} \label{sec-simulation}
	This section evaluates the proposed multi-Airy beamforming scheme through numerical simulations. We first specify the simulation parameters and rate metric. We then compare full-space intensity distributions, evaluate robustness to Tx-obs-Rx geometry estimation errors, compare the achievable rates of single-Airy and multi-Airy beamforming, and provide a UPA receive-plane validation.

\subsection{Simulation Setup}

We consider a near-field uniform linear array (ULA) with single-sided blockage. Unless otherwise stated, the target is located on broadside at \(z_0 = 5\) m, and the obstacle plane is placed at \(z_{\mathrm{obs}} = 4.5\) m. The carrier frequencies are \(f_c = 100\) GHz and \(30\) GHz, where the latter is included for millimeter-wave comparison. The aperture length is \(L = 1.0\) m, the bandwidth is \(B_{\mathrm w} = 5\) GHz for both carrier frequencies, and the normalized noise power is \(N_0 = 3.16 \times 10^{-2}\). A receiver window of width \(W_{\mathrm{rx}} = 6\) mm is centered at the target.

Six beamforming schemes are evaluated: focused uniform beam, focused Gaussian beam, single-Airy (\(M=1\)), double-Airy (\(M=2\)), triple-Airy (\(M=3\)), and four-Airy (\(M=4\)). The focused uniform beam applies equal-amplitude weights with near-field phase focusing, while the focused Gaussian beam applies a Gaussian amplitude taper with identical near-field phase focusing. All schemes are normalized to the same total transmit power \(P_{\mathrm{tot}} = 1\) in normalized units, consistent with \eqref{eq:multi_airy_design_problem}.

\begin{table}[!t]
	\centering
	\caption{Main Simulation Parameters}
	\label{tab:simulation_parameters}
	\begin{tabular}{@{}p{0.58\linewidth}p{0.32\linewidth}@{}}
		\toprule
		Parameter & Value \\
		\midrule
		Carrier frequency \(f_c\) & \(100\) GHz / \(30\) GHz \\
		Aperture length \(L\) & \(1.0\) m \\
		Target location & \((x_0,z_0)=(0,5~\mathrm{m})\) \\
		Receiver-window width \(W_{\mathrm{rx}}\) & \(6\) mm \\
		Bandwidth \(B_{\mathrm w}\) & \(5\) GHz \\
		Normalized noise power \(N_0\) & \(3.16\times 10^{-2}\) \\
		Normalized total transmit power \(P_{\mathrm{tot}}\) & \(1\) \\
		\bottomrule
	\end{tabular}
\end{table}

To bridge the field simulation with communication metrics, the effective received intensity is defined as the average intensity over the target window,

\begin{equation}
	J_{\mathrm{rx}}=\frac{1}{W_{\mathrm{rx}}}
	\int_{x_0-W_{\mathrm{rx}}/2}^{x_0+W_{\mathrm{rx}}/2}
	\left| \psi_{\mathrm{tot}}(x,z_0) \right|^2 \mathrm{d}x,
\end{equation}
where \(\psi_{\mathrm{tot}}(x,z_0)\) is the target-plane field after blockage. With the common transmit-power normalization, \(J_{\mathrm{rx}}/N_0\) serves as the normalized receive SNR for rate comparison. The corresponding achievable rate is

\begin{equation}
	R = B_{\mathrm w} \log_2 \left( 1 + \frac{J_{\mathrm{rx}}}{N_0} \right).
\end{equation}

\begin{figure*}[!t]
	\centering
	\includegraphics[width=0.86\textwidth]{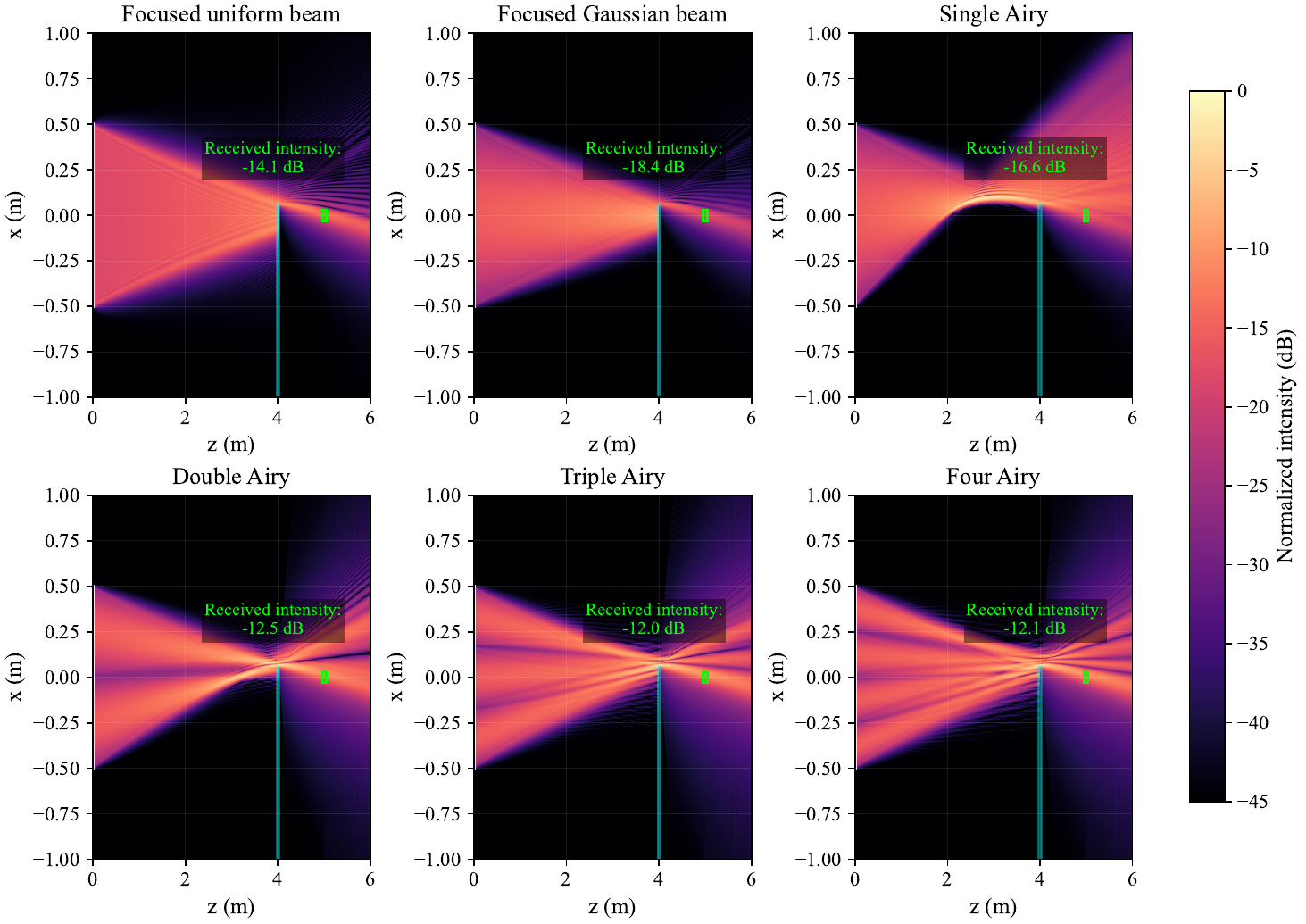}
	\caption{Normalized full-space intensity distributions at \(\mathrm{bl}=0.8\).}
	\label{fig:blocked_intensity}
\end{figure*}

\subsection{Comparison of Full-Space Intensity Distributions}

Fig.~\ref{fig:blocked_intensity} compares the full-space intensity distributions at \(\mathrm{bl}=0.8\) with \(z_{\mathrm{obs}} = 4\) m. The focused uniform beam and focused Gaussian beam achieve target-window average intensities of \(-14.1\) dB and \(-18.4\) dB, respectively. The single-Airy beam concentrates its energy along a single curved trajectory, yielding \(-16.6\) dB after blockage. In contrast, the double-Airy, triple-Airy, and four-Airy schemes generate multiple curved trajectories and achieve \(-12.5\) dB, \(-12.0\) dB, and \(-12.1\) dB, respectively. These results demonstrate that multi-Airy beamforming, by exploiting path diversity, attains higher target-window energy than single-Airy beamforming under strong blockage.

\subsection{Robustness to Tx-Obs-Rx Geometry Estimation Errors}

For this evaluation, we use the same aperture and link distance, with \(\theta = 0\) and \(z_{\mathrm{obs}} = 4.5\) m, and a nominal invisible ratio of \(\mathrm{bl} = 0.6\). The trajectory parameters \((B, F, \theta)\) of each Airy scheme are selected for the estimated geometry and then held fixed as the actual obstacle intrudes further into the nominally visible region, testing the sensitivity mechanism analyzed in Section~\ref{subsec:robustness_analysis}.

\begin{figure}
	\centering
	\includegraphics[width=0.9\linewidth]{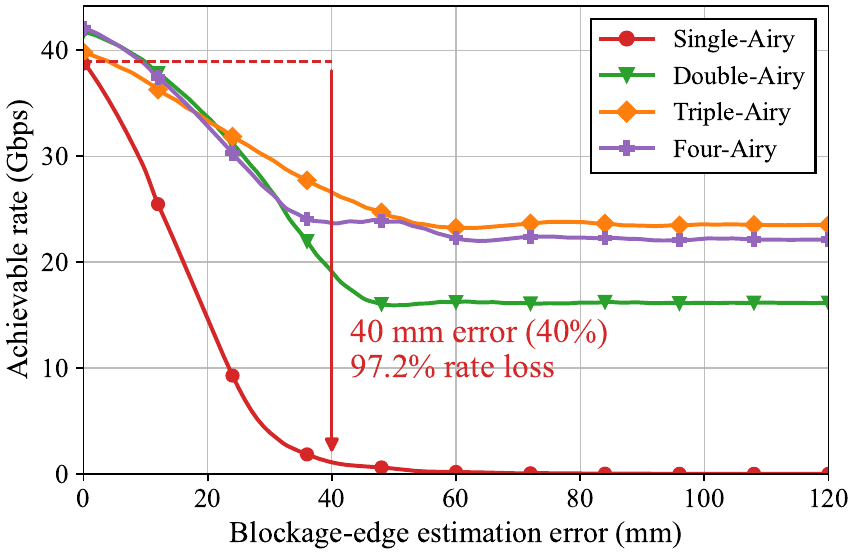}
	\caption{Achievable rate versus Tx-obs-Rx geometry estimation error.}
	\label{fig:edge_error_robustness}
\end{figure}

Fig.~\ref{fig:edge_error_robustness} shows the achievable rate when the actual geometry deviates from the nominal one used for beam selection. In this planar-screen model, a positive error indicates that the obstacle extends farther into the unblocked region in the estimation. All Airy parameters remain fixed throughout the error sweep.

The single-Airy rate drops from \(38.91\) Gbps at zero error to \(1.09\) Gbps at a \(40\)-mm edge-coordinate error, causing a \(97.2\%\) loss in achievable rate. This centimeter-level error is much smaller than the 0.75-m horizontal positioning-accuracy requirement specified for the Indoor Factory-ISAC test environment in IMT-2030~\cite{ma2026imt2030tpr}, yet it is sufficient to move the obstacle edge across the narrow high-intensity region of the designed Airy trajectory. At the same error, the best multi-Airy configuration limits the achievable-rate loss to \(33.3\%\), and all considered multi-Airy schemes keep the loss below \(54.3\%\). This result complements the received signal energy example with a smaller error in Fig.~\ref{fig:single_airy_edge_error_field}, corroborates the robustness mechanism based on trajectory diversity described by \eqref{eq:multi_edge_sensitivity}, and confirms that the multi-Airy robustness advantage becomes particularly pronounced under geometry estimation errors.

\subsection{Communication Performance Under Blockage}

\begin{figure}
	\centering
	\includegraphics[width=0.90\linewidth]{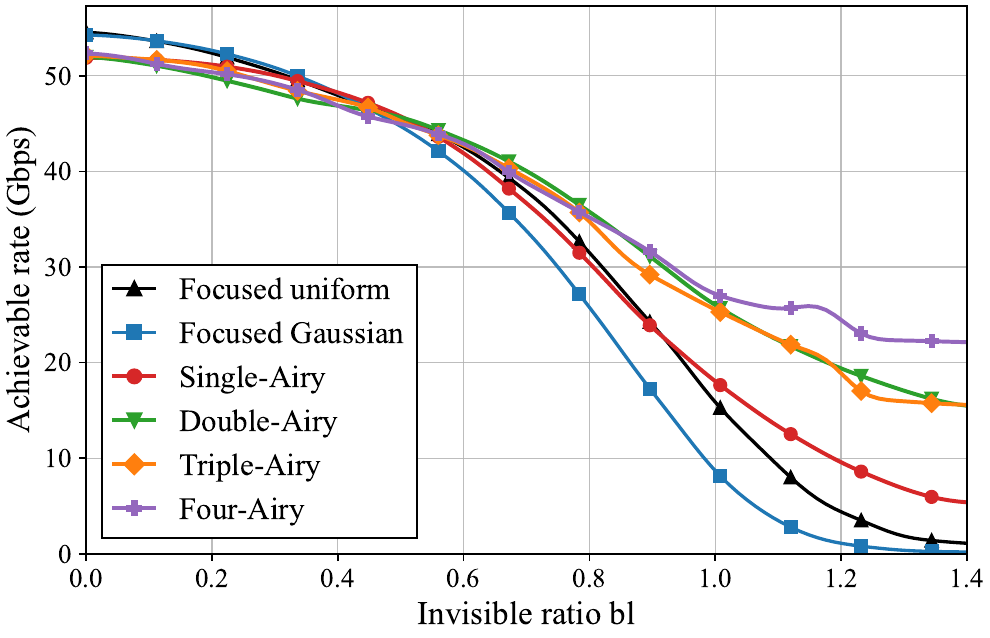}
	\caption{Achievable rate versus invisible ratio.}
	\label{fig:full_range}
\end{figure}

Fig.~\ref{fig:full_range} plots the achievable rate versus the invisible ratio over \(0 \leq \mathrm{bl} \leq 1.4\). When \(\mathrm{bl} > 1\), the obstacle edge is extrapolated beyond complete blockage of the target-visible aperture span, so no direct LoS aperture remains. In this situation, the energy of the target window is primarily propagated by the diffracted field, and the multi-Airy beamforming scheme consistently provides an obvious rate advantage by exploiting its path diversity. For example, at \(\mathrm{bl} = 0.8\), the four-Airy scheme achieves \(35.27\) Gbps, compared with \(30.55\) Gbps for single-Airy beamforming and \(31.65\) Gbps for the focused uniform beam. At \(\mathrm{bl} = 1.2\), the four-Airy scheme maintains \(24.27\) Gbps, whereas the single-Airy and focused uniform beams drop to \(9.45\) Gbps and \(4.27\) Gbps, respectively.

\begin{figure}
	\centering
	\includegraphics[width=0.90\linewidth]{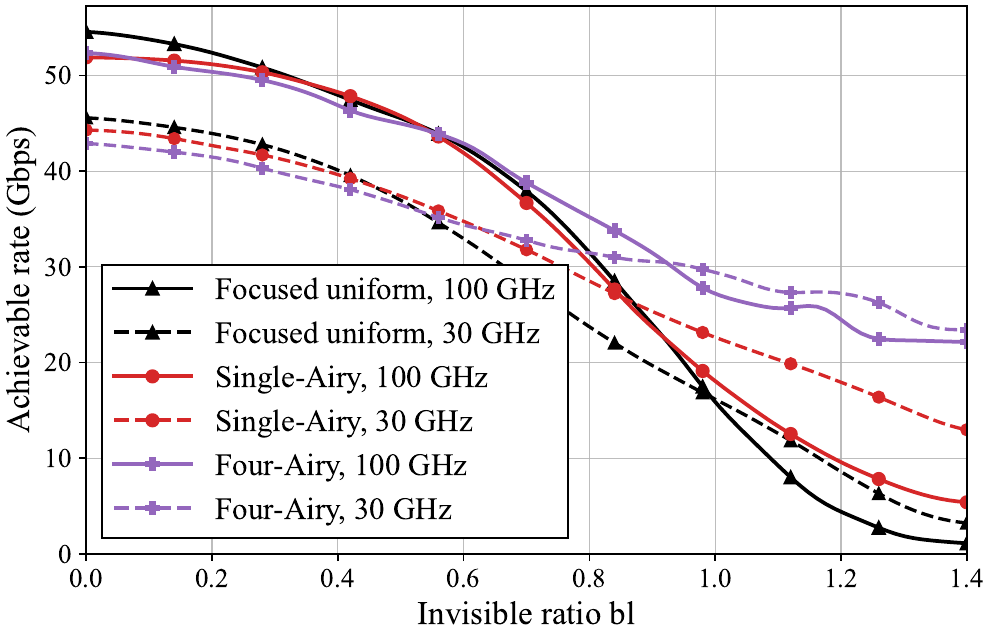}
	\caption{Achievable rate versus invisible ratio at \(100\) GHz and \(30\) GHz.}
	\label{fig:frequency_comparison}
\end{figure}

To examine the proposed scheme's effectiveness in the millimeter-wave band, Fig.~\ref{fig:frequency_comparison} compares the focused uniform beam, single-Airy, and four-Airy beamforming at \(100\) GHz and \(30\) GHz. The same aperture, total transmit power, obstacle geometry, bandwidth, normalized noise power, and trajectory-selection rule are used at both frequencies. At \(\mathrm{bl} = 0.8\), four-Airy beamforming achieves \(35.27\) Gbps and \(31.47\) Gbps at \(100\) GHz and \(30\) GHz, respectively, compared with \(30.55\) Gbps and \(28.57\) Gbps for single-Airy beamforming. At \(\mathrm{bl} = 1.2\), four-Airy beamforming maintains \(24.27\) Gbps and \(27.19\) Gbps, respectively. These millimeter-wave results preserve the same strong-blockage robustness trend observed at sub-THz frequencies.

\begin{figure}
	\centering
	\includegraphics[width=0.96\linewidth]{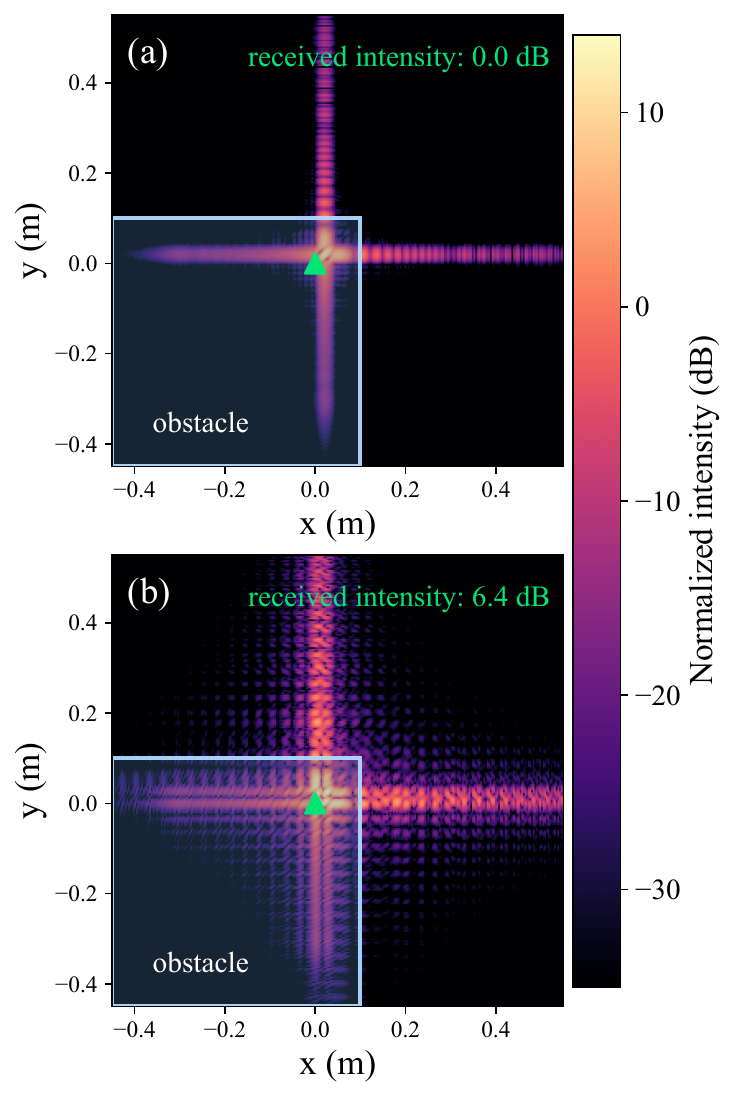}
	\caption{Receive-plane intensity comparison for UPA Airy beamforming under rectangular blockage: (a) single-Airy beamforming; (b) four-Airy beamforming. The blue rectangles indicate the obstacle region.}
	\label{fig:upa_receive_plane}
\end{figure}

\subsection{UPA Receive-Plane Validation}

For this validation, the UPA Airy field is generated in a separable form along the \(x\)- and \(y\)-directions. In the four-Airy case, the square aperture is divided into four quadrant sub-arrays, and each quadrant configures a tailored separable Airy field. The constant phase offsets are then obtained by aligning the component responses at the target, following the same coherent-combining principle developed for the ULA case.

Fig.~\ref{fig:upa_receive_plane} further validates the multi-Airy principle on a \(1~\mathrm{m}\times1~\mathrm{m}\) UPA. The target is located at \((x_0,y_0,z_0)=(0,0,5~\mathrm{m})\), and a rectangular obstacle is placed at \(z_{\mathrm{obs}}=4\) m with \(x\in[-1,0.1]\) m and \(y\in[-1,0.1]\) m. The blue rectangle marks the obstacle region, and the four-Airy case partitions the UPA into four quadrant sub-arrays. The received intensity is averaged over a \(0.04~\mathrm{m}\times0.04~\mathrm{m}\) window centered at the target. Under the same total transmit-power normalization, the received-window average intensity of single-Airy beamforming is used as the \(0\)-dB reference, and four-Airy beamforming achieves a \(6.41\)-dB improvement. With the additional transverse dimension of the UPA, the path-diversity gain of multi-Airy beamforming becomes more pronounced. This result confirms that the trajectory-diversity and coherent-combining principle remains effective for planar arrays.

\vspace{-1.0em}
\section{Conclusions}
	
This paper addressed the Tx-obs-Rx geometry sensitivity of single-Airy beamforming by developing a multi-Airy beamforming scheme with coordinated curved trajectories. The analysis showed that a small Tx-obs-Rx geometry error can attenuate the dominant single-Airy main lobe contribution, while multi-Airy generation mitigates this problem through trajectory diversity and coherent combining. In the robustness test, under a 40-mm edge-coordinate error, the single-Airy beamforming scheme retained only 2.8\% of the rate achieved under perfectly estimated geometry, corresponding to a \(97.2\%\) loss in achievable rate. In sharp contrast, the best multi-Airy configuration reduces this loss to \(33.3\%\), about 1/3 of that incurred by the single-Airy scheme under the same error. In the blockage-rate sweep, four-Airy beamforming achieved 2.57x and 5.68x rate improvements over the single-Airy beamforming and the focused uniform beamforming at \(\mathrm{bl}=1.2\), respectively. Beyond these performance gains, this work provides a new perspective on blockage mitigation by, for the first time, explicitly quantifying the sensitivity problem of single-Airy beamforming. Moreover, the proposed coordinated multi-Airy generation method is general and inherently extensible to distributed systems, as it does not rely on specific sub-array configurations but accommodates arbitrary array geometries. Future work can extend the proposed framework to multi-user communication, wideband transmission, beam training, and dynamic beam tracking scenarios.

\appendices
\section{Proof of Theorem~\ref{thm:phase_alignment_condition}}
\label{app:phase_alignment_proof}
\begin{IEEEproof}
	By the triangle inequality, the combined target-response magnitude satisfies
	\begin{equation}
		\left|
		\sum_{m=1}^{M}e^{\jmath\delta_m}g_m
		\right|
		\leq
		\sum_{m=1}^{M}|g_m|.
	\end{equation}
	The right-hand side is an upper bound on the target-response magnitude for the specified individual Airy beam responses \(\{g_m\}\). Writing \(g_m=|g_m|e^{\jmath\varphi_m}\), the phase dependence can be expressed as
	\begin{equation}
		\begin{aligned}
		\left|
		\sum_{m=1}^{M}e^{\jmath\delta_m}g_m
		\right|^2
		&=
		\sum_{m=1}^{M}|g_m|^2\\
		&\quad+
		2\sum_{m<\ell}|g_m||g_\ell|
		\cos\!\left(
		\delta_m+\varphi_m-\delta_\ell-\varphi_\ell
		\right).
		\end{aligned}
	\end{equation}
	Therefore, phase misalignment reduces the coherent cross terms, and the upper bound is attained when all nonzero terms \(e^{\jmath\delta_m}g_m\) have the same phase. Choosing \(\delta_m^{(0)}=\phi_0-\varphi_m\) gives \(e^{\jmath\delta_m^{(0)}}g_m=e^{\jmath\phi_0}|g_m|\) for every nonzero \(g_m\), which attains the maximum target-response power \(\left(\sum_{m=1}^{M}|g_m|\right)^2\).
\end{IEEEproof}

	\footnotesize
	\bibliographystyle{IEEEtran} 
	\bibliography{ref}  
	\normalsize
	
\end{document}